\def\BibTeX{{\rm B\kern-.05em{\sc i\kern-.025em b}\kern-.08em
    T\kern-.1667em\lower.7ex\hbox{E}\kern-.125emX}}
\acrodef{5G}{the fifth generation}
\acrodef{MIMO}{multiple-input multiple-output}
\acrodef{MISO}{multiple-input single-output}
\acrodef{RF}{radio frequency}
\acrodef{BS}{base station}
\acrodef{UE}{user equipment}
\acrodef{LoS}{line-of-sight}
\acrodef{NLoS}{non-line-of-sight}
\acrodef{AoA}{angle-of-arrival}
\acrodef{AoD}{angle-of-departure}
\acrodef{UPA}{uniform planar array}
\acrodef{ARV}{array response vector}
\acrodef{CGV}{channel gain vector}
\acrodef{AGV}{antenna gain vector}
\acrodef{EM}{electromagnetic}
\acrodef{MA}{movable antenna}
\acrodef{3D}{three-dimensional}
\acrodef{AWGN}{additive white Gaussian noise}
\acrodef{ERA}{electromagnetically reconfigurable antenna}
\newtheorem{theorem}{\textbf{Theorem}}
\newtheorem{remark}{\textbf{Remark}}
\newenvironment{proof}{\textit{\textbf{Proof:}}}{\hfill$\square$}
\newcommand{\TT}{\mathsf{T}}
\newcommand{\HH}{\mathsf{H}}
\newcommand{\av}{{\bf a}}
\newcommand{\bv}{{\bf b}}
\newcommand{\cv}{{\bf c}}
\newcommand{\dv}{{\bf d}}
\newcommand{\fv}{{\bf f}}
\newcommand{\gv}{{\bf g}}
\newcommand{\hv}{{\bf h}}
\newcommand{\kv}{{\bf k}}
\newcommand{\sv}{{\bf s}}
\newcommand{\wv}{{\bf w}}
\newcommand{\vv}{{\bf v}}
\newcommand{\xv}{{\bf x}}
\newcommand{\Am}{{\bf A}}
\newcommand{\Fm}{{\bf F}}
\newcommand{\Vm}{{\bf V}}
\newcommand{\Dt}{{\mathsf D}}
\newcommand{\Tt}{{\mathsf T}}
\newcommand{\alphav}{\hbox{\boldmath$\alpha$}}
\algnewcommand{\LineComment}[1]{\Statex \hskip\ALG@thistlm \(\triangleright\) #1}
\newcommand{\colorcircled}[2]{%
    \raisebox{-0.5ex}{\tcbox[
        colback=#1, 
        colframe=white, 
        colupper=white, 
        arc=2.5pt, 
        boxrule=1pt, 
        left=2pt, right=2pt, top=2pt, bottom=2pt, 
        boxsep=0.1pt 
    ]{#2}}%
}
\definecolor{RoyalBlue}{RGB}{65, 105, 225}  
\begin{document}
\bstctlcite{IEEEexample:BSTcontrol} 

\title{Tri-Hybrid Multi-User Precoding Based on Electromagnetically Reconfigurable Antennas\\ \vspace{-0.2em}
\author{
Pinjun Zheng\IEEEauthorrefmark{1}, Yuchen Zhang\IEEEauthorrefmark{2}, Tareq Y. Al-Naffouri\IEEEauthorrefmark{2}, Md. Jahangir Hossain\IEEEauthorrefmark{1}, Anas Chaaban\IEEEauthorrefmark{1}\\
\IEEEauthorrefmark{1}\textit{School of Engineering, University of British Columbia, Kelowna, Canada}\\
\IEEEauthorrefmark{2}\textit{CEMSE, King Abdullah University of Science and Technology, Thuwal, Saudi Arabia}\\
(Email: pinjun.zheng@ubc.ca) \vspace{-0.7em}
} 
}

\maketitle

\begin{abstract}
The tri-hybrid precoding architecture based on electromagnetically reconfigurable antennas (ERAs) is a promising solution for overcoming key limitations in multiple-input multiple-output communication systems. Aiming to further understand its potential, this paper investigates the tri-hybrid multi-user precoding problem using pattern reconfigurable ERAs. To reduce model complexity and improve practicality, we characterize each antenna’s radiation pattern using a spherical harmonics decomposition. While mathematically tractable, this approach may lead to over-optimized patterns that are physically unrealizable. To address this, we introduce a projection step that maps the optimized patterns onto a realizable set. Simulation results demonstrate that spherical harmonics-based radiation pattern optimization significantly enhances sum rate performance. However, after projection onto a realizable set obtained from real ERA hardware, the performance gain is notably reduced or even negligible, underscoring the need for more effective projection techniques and improved reconfigurable antenna hardware.
\end{abstract}
\begin{IEEEkeywords}
electromagnetically reconfigurable antenna, pattern reconfigurable antennas, tri-hybrid MIMO, multiuser.
\end{IEEEkeywords}
\vspace{-0.7em}
\section{Introduction} 

The rapid evolution of wireless communication systems toward \ac{5G} and beyond necessitates significantly higher data rates, enhanced reliability, and reduced latency~\cite{Dogra2021Survey}. A common approach to achieving these goals is to leverage higher frequency bands, which offer larger available bandwidths. However, these approaches face challenges such as severe path loss and hardware constraints. Beyond purely frequency-based solutions, hardware advancements offer promising complements. \Ac{MIMO} technologies, including massive MIMO and hybrid precoding architectures, have been extensively explored to enhance spectral and energy efficiency~\cite{Larsson2014Massive}. More recently, \ac{ERA} has emerged as a transformative technology, capable of dynamically altering \ac{EM} radiation properties (e.g., radiation pattern, polarization, and frequency) of antennas to optimize signal transmission and reception, ultimately enhancing data rates without requiring additional spectrum resources~\cite{Wang2025Electromagnetically}. 

Despite the potential of \acp{ERA}, the tri-hybrid precoding problem, where analog, digital, and \ac{EM}-domain reconfigurable elements are jointly optimized, remains a key challenge in fully unlocking their capabilities and has been relatively less explored~\cite{Castellanos2025Embracing}. A polarization-reconfigurable antenna array-based communication system is studied in~\cite{Castellanos2024Linear}, while a tri-hybrid precoding solution based on dynamic metasurface antennas is reported in~\cite{Castellanos2023Energy}. However, tri-hybrid precoding solutions leveraging pattern reconfigurable antennas have not been sufficiently investigated yet. By dynamically directing energy toward intended users and mitigating interference, this type of \ac{ERA} enables more efficient power utilization compared to fixed radiation pattern antennas~\cite{Zheng2025Enhanced}. Recent advancements in pattern reconfigurable antenna prototyping have already begun to demonstrate these advantages~\cite{Wang2025Electromagnetically}. A recent theoretical study on pattern reconfigurable antenna-based tri-hybrid beamforming was reported in~\cite{Liu2025Tri}. While the results are promising, the approach used in~\cite{Liu2025Tri} relies on discretizing the radiation pattern in the angular domain, which can lead to an extremely large \ac{EM}-domain channel matrix, especially in the \ac{3D} real space, leading to significant computational challenges. Moreover, the study simulates a half-wavelength spaced antenna array using elements whose physical size exceeds half a wavelength, which may result in overly optimistic performance estimates. To tackle these issues, this paper revisits the problem and makes the following key contributions:
\begin{itemize}
    \item \textbf{Radiation Pattern Modeling:} We model the radiation pattern using spherical harmonics decomposition~\cite{Ying2024Reconfigurable}. By selecting an appropriate truncation length, any \ac{3D} radiation pattern can be efficiently captured using a limited number of harmonic coefficients (e.g., 25 in this paper).
    \item \textbf{Optimization Framework:} We address the formulated tri-hybrid optimization problem using an alternating optimization approach, where each subproblem is solved in closed form, thereby obtaining the optimized antenna radiation patterns, analog precoders, and digital recorders.
    \item \textbf{Realizable Projection:} Finally, we project the optimized radiation patterns onto a set of realizable candidates based on practical array-configured antennas with reconfigurable radiation patterns in~\cite{Wang2025Electromagnetically}, ensuring feasibility in real-world implementations.
\end{itemize}

\section{System and Signal Model}

\begin{figure}[t]
  \centering
  \includegraphics[width=\linewidth]{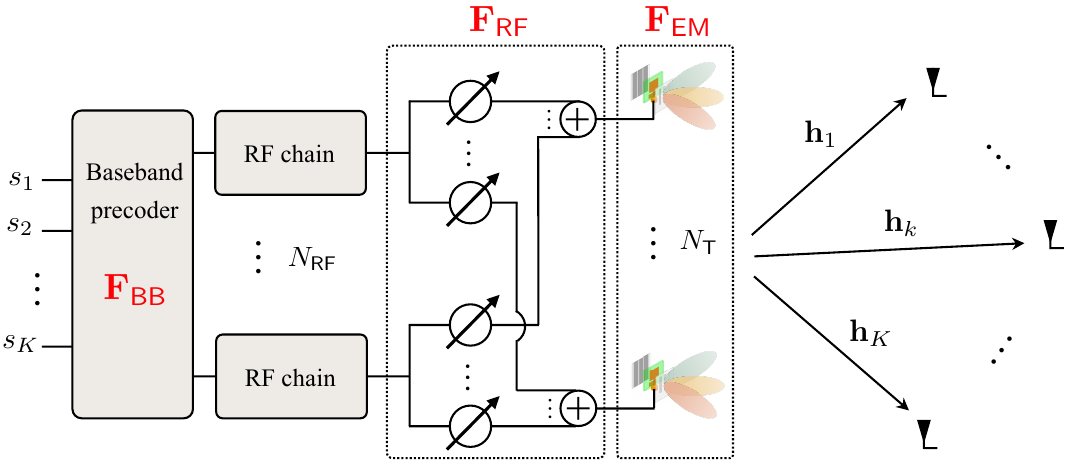}
  \vspace{-2em}
  \caption{
  Illustration of the tri-hybrid precoding architecture for a multi-user communication system utilizing \ac{ERA}.}
  \label{fig_trihybrid}
\end{figure} 

Consider a narrowband downlink multi-user \ac{MISO} communication system where a \ac{BS} equipped with $N_\mathsf{T}$ antennas (each connected to a phase shifter) and $N_\mathsf{RF}$ transmit \ac{RF} chains serves $K$ single-antenna \acp{UE} simultaneously. The \ac{BS} communicates with each \ac{UE} through a single data stream, resulting in a total of $K$ data streams at the \ac{BS}. It is assumed $K\leq N_\mathsf{RF}\leq N_\mathsf{T}$. As shown in Fig.~\ref{fig_trihybrid}, we consider a tri-hybrid preconding architecture.

In this tri-hybrid precoding structure, the \ac{BS} first applies a baseband precoder, $\mathbf{F}_\mathsf{BB}\in\mathbb{C}^{N_\mathsf{RF}\times K}$, to the $K$ data streams digitally (i.e., \emph{digital precoding}). These digitally processed signals are then up-converted to the carrier frequency through $N_\mathsf{RF}$ \ac{RF} chains. Next, a \ac{RF} precoder, $\mathbf{F}_\mathsf{RF}\in\mathbb{C}^{N_\mathsf{T}\times N_\mathsf{RF}}$, is applied using analog phase shifters (i.e., \emph{analog precoding}). Subsequently, these analog-precoded signals are fed into the $N_\mathsf{T}$ antennas, where they are converted into \ac{EM} waves and propagated toward the \acp{UE}. In this work, we explore the utilization of the \ac{ERA} at the \ac{BS}, which enables the electronic adjustment of each antenna's radiation pattern. A recently developed hardware prototype of such an \ac{ERA} has been reported in \cite{Wang2025Electromagnetically}. The antennas at \acp{UE} are assumed to be fixed and isotropic.

Let $s_k\in\mathbb{C}$ denote the data symbol that the \ac{BS} transmits to the $k^\text{th}$ \ac{UE}. We concatenate $\sv=[s_1,s_2,\dots,s_K]^\TT$ as the vector of all data symbols and express $\Fm_\mathsf{BB}=[\fv_{\mathsf{BB},1},\fv_{\mathsf{BB},2},\dots,\fv_{\mathsf{BB},K}]$. Following the described precoding architecture, the signals fed into the $N_\mathsf{T}$ antennas can be written as $\mathbf{x} = \Fm_\mathsf{RF}\Fm_\mathsf{BB}\sv=\sum_{k=1}^K\Fm_\mathsf{RF}\fv_{\mathsf{BB},k}s_k$~\cite{Sohrabi2016Hybrid}.
We assume that $\mathbb{E}\{\sv\sv^\HH\}=\mathbf{I}_{K}$. Let $\hv_k\in\mathbb{C}^{N_\mathsf{T}}$ denote the wireless channel from the \ac{BS} to the $k^\text{th}$ \ac{UE}. The received signal at the $k^\text{th}$ \ac{UE} can be written as
\begin{equation}\label{eq:yk}
    y_k = \hv_k^\TT\xv + z_k = \hv_k^\TT\Fm_\mathsf{RF}\fv_{\mathsf{BB},k}s_k + \hv_k^\TT\sum_{i\neq k}\Fm_\mathsf{RF}\fv_{\mathsf{BB},i}s_i + z_k,
\end{equation}
where $z_k\sim \mathcal{CN}(0,\sigma_k^2)$ is the noise at the $k^\text{th}$ \ac{UE}.

In Section~\ref{sec:CM}, we will show that channel $\hv_k$ can be expressed as $\hv_k=\Fm_\mathsf{EM}^\TT\hv_k^\mathsf{EM}$, $\forall k$, where $\hv_k^{\mathsf{EM}}$ is an effective \ac{EM}-domain channel and $\Fm_\mathsf{EM}\!=\!\mathrm{blkdiag}\big\{\cv^{(1)},\cv^{(2)},\!\dots\!,\cv^{(N_\mathsf{T})}\big\}$ is the \ac{EM} precoder. The vector $\cv^{(n)}$ characterizes the $n^\text{th}$ antenna's radiation pattern.  Assuming perfect knowledge of $\{\hv_k^\mathsf{EM}\}_{k=1}^K$, we can formulate the tri-hybrid multi-user precoding problem that optimizes the total spectral efficiency (sum-rate) as
\begin{subequations}\label{eq:SRmax}
\begin{align}
    \max_{\Fm_\mathsf{EM},\Fm_\mathsf{RF},\Fm_\mathsf{BB}}&\ \sum_{k=1}^K\beta_k\log_2\!\bigg(1+\frac{|(\hv_k^{\mathsf{EM}})^\TT\Fm_\mathsf{EM}\Fm_\mathsf{RF}\fv_{\mathsf{BB},k}|^2}{\sigma^2\!+\!\displaystyle\sum_{i\neq k}|(\hv_k^{\mathsf{EM}})^\TT\Fm_\mathsf{EM}\Fm_\mathsf{RF}\fv_{\mathsf{BB},i}|^2}\bigg)\notag\\
    \mathrm{s.t.\quad }&\quad \mathrm{Tr}(\Fm_\mathsf{RF}\Fm_\mathsf{BB}\Fm_\mathsf{BB}^\HH\Fm_\mathsf{RF}^\HH)\leq P_\mathsf{max},\label{eq:a}\\
    &\quad |\Fm_\mathsf{RF}(i,j)|^2 = {1}/{N_\mathsf{T}},\ \forall i,j,\label{eq:b}\\
    &\quad \Fm_{\mathsf{EM}} = \mathrm{blkdiag}\big\{\cv^{(1)},\cv^{(2)},\dots,\cv^{(N_\mathsf{T})}\big\},\label{eq:c}\\
    &\quad \|\cv^{(n)}\|^2=4\pi,\ n=1,2,\dots,N_\mathsf{T},\label{eq:d}
\end{align}
\end{subequations}
where $\beta_k>0$ is a weight for the $k^\text{th}$ \ac{UE}, $P_\mathsf{max}$ is the maximum transmit power, and $\|\cv^{(n)}\|^2$ calculates the total power of the antenna gain, which is constrained to $4\pi$ based on the definition of antenna gain~\cite[Remark 1]{Zheng2025Enhanced}. Next, we characterize the channel model to incorporate the \ac{EM}-domain reconfigurability demonstrated in~\eqref{eq:SRmax}.

\section{Channel Model}\label{sec:CM}
In general, we can express $\hv_k$ in the frequency domain as
\begin{equation}\label{eq:hk}
    \hv_k = \sqrt{\frac{N_\mathsf{T}}{L_k}} \sum_{\ell=1}^{L_k} \alphav_{k,\ell}\odot\gv_{k,\ell}\odot\av_{k,\ell},
\end{equation}
where $\odot$ denotes Hadamard product, $\alphav_{k,\ell}$ is the complex \ac{CGV}, $\gv_{k,\ell}$ is the \ac{AGV}, and $\av_{k,\ell}$ is the \ac{ARV}, corresponding to the $\ell^\text{th}$ path to the $k^\text{th}$ \ac{UE}. The \ac{CGV} and \ac{AGV} can be expressed as
\begin{align}
    \alphav_{k,\ell} &= \big[\alpha_{k,\ell}^{(1)},\alpha_{k,\ell}^{(2)},\dots,\alpha_{k,\ell}^{(N_\mathsf{T})}\big]^\TT\in\mathbb{C}^{N_\mathsf{T}},\\
    \gv_{k,\ell} &= \big[G_{k,\ell}^{(1)},G_{k,\ell}^{(2)},\dots,G_{k,\ell}^{(N_\mathsf{T})}\big]\in\mathbb{R}_+^{N_\mathsf{T}},\label{eq:AGV}
\end{align}
where $\alpha_{k,\ell}^{(n)}$ denotes the complex channel gain between  the $n^\text{th}$ transmit antenna element and the $k^\text{th}$ \ac{UE} through the $\ell^\text{th}$ path, and $G_{k,\ell}^{(n)}$ denotes the antenna gain of the $n^\text{th}$ transmit antenna element for this path. Note that in the frequency domain, the pulse-shaping filter response can be absorbed into \ac{CGV} and is therefore omitted in this work~\cite{Venugopal2017Venugopal}.

\subsection{Fixed Radiation Pattern Antennas-Based Channel}
We begin our channel model derivation by examining conventional antennas with fixed radiation patterns. In this case, all antenna elements share the same and fixed radiation pattern, which we denote as $G(\theta,\phi)$. Here, the radiation pattern $G$ is a function of \ac{AoD} $(\theta,\phi)$~\cite{Zheng2025Enhanced}, where $\theta$ denotes the inclination angle and $\phi$ denotes the azimuth angle. Since~\eqref{eq:hk} is a general channel expression, the following subsections analyze the far-field and near-field scenarios separately.

\subsubsection{Far-Field}
In far-field scenarios, all antenna elements share the same channel gain and antenna gain, i.e., $\alpha_{k,\ell}^{(n)}=\alpha_{k,\ell}$, $G_{k,\ell}^{(n)}=G(\theta_{k,\ell},\phi_{k,\ell})$, $\forall n$. In addition, assuming an $N_\mathsf{T}^{\mathsf{h}}\times N_\mathsf{T}^{\mathsf{v}}$ \ac{UPA} configuration (thus $N_\mathsf{T}=N_\mathsf{T}^{\mathsf{h}} N_\mathsf{T}^{\mathsf{v}}$) at the \ac{BS}, the \ac{ARV} in the far-field is given by 
\begin{equation}
    \av_{k,\ell} = \frac{1}{\sqrt{N_\mathsf{T}}}e^{-j2\pi\varphi^\mathsf{h}_{k,\ell}\kv(N_\Tt^\mathsf{h})}\!\otimes\! e^{-j2\pi\varphi^\mathsf{v}_{k,\ell}\kv(N_\Tt^\mathsf{v})},
\end{equation}
where~$\kv(N)=[0,1,\dots,N-1]^\TT$, and~$\varphi^\mathsf{h}_{k,\ell}$ and~$\varphi^\mathsf{v}_{k,\ell}$ are the spatial angles corresponding to the horizontal and vertical dimensions, respectively. Assuming the \ac{UPA} is deployed on the $YOZ$-plane of the transmitter's body coordinate system, we obtain $\varphi^\mathsf{h}_{k,\ell}\triangleq{d}\sin(\phi_{k,\ell})\sin(\theta_{k,\ell})/{\lambda}$ and $\varphi^\mathsf{v}_{k,\ell}\triangleq{d}\cos(\theta_{k,\ell})/{\lambda}$, where~$\lambda$ is the wavelength of the operating frequency and $d$ is the inter-element spacing of the transmit antenna array. Therefore, \eqref{eq:hk} is reduced to the commonly used far-field multipath channel model as~\cite{Ayach2014Spatially,Tarboush2021TeraMIMO,Zheng2024Coverage}
\begin{equation}\label{eq:hCV}
    \hv_k = \sqrt{\frac{N_\mathsf{T}}{L_k}}\sum_{\ell=1}^{L_k} \alpha_{k,\ell} G(\theta_{k,\ell},\phi_{k,\ell}) \av(\theta_{k,\ell},\phi_{k,\ell}). 
\end{equation}

\subsubsection{Near-Field}
In near-field scenarios, the channel gains of the propagation through different antenna elements are not identical due to the spherical wavefront~\cite{Yuan2023Spatial}. Thus, $\alpha_{k,\ell}^{(n)}\neq \alpha_{k,\ell}^{(m)}$, $\forall n\neq m$ in general. Moreover, the antenna gains at different elements vary due to the varying \ac{AoD} across the array. Specifically, we can write $G_{k,\ell}^{(n)}=G\big(\theta_{k,\ell}^{(n)},\phi_{k,\ell}^{(n)}\big)$, $\forall n$. Here,~$\big(\theta_{k,\ell}^{(n)},\phi_{k,\ell}^{(n)}\big)$ represents the \ac{AoD} at the $n^\text{th}$ antenna element. In addition, the \ac{ARV} in the near-field is given by~\cite{Lu2024Tutorial}
\begin{equation}
    \av_{k,\ell} = \frac{1}{\sqrt{N_\mathsf{T}}}\Big[e^{-j\frac{2\pi}{\lambda}(r_{k,\ell}-r_{k,\ell}^{(1)})},\dots,e^{-j\frac{2\pi}{\lambda}(r_{k,\ell}-r_{k,\ell}^{(N_\mathsf{T})})} \Big]^\TT,\notag
\end{equation}
where $r_{k,\ell}^{(n)}$ is the propagation distance between the $n^\text{th}$ transmit antenna element and the $k^\text{th}$ \ac{UE} through the $\ell^\text{th}$ path, and $r_{k,\ell}$ is the propagation distance between the $k^\text{th}$ \ac{UE} and a reference point at the \ac{BS}. 

The above analysis demonstrates that~\eqref{eq:hk} is a general channel model applicable to both far-field and near-field scenarios. Therefore, the following derivation will proceed based on~\eqref{eq:hk} without distinguishing between the two. We will incorporate the reconfigurability of each antenna element's radiation pattern into~\eqref{eq:hk} and obtain a concise representation using the \ac{EM} precoder and the \ac{EM}-domain channel.

\subsection{\ac{ERA}-Based Channel}
When considering radiation pattern reconfigurability of \acp{ERA}, \eqref{eq:hk} still holds. However, the radiation pattern of each antenna can be different. Hence, the antenna gain of the $n^\text{th}$ antenna element should be written as
\begin{equation}\label{eq:Gn_kl}
G_{k,\ell}^{(n)}=G^{(n)}\big(\theta_{k,\ell}^{(n)},\phi_{k,\ell}^{(n)}\big).    
\end{equation}
Here, $G^{(n)}$ denote the distinct radiation pattern of the $n^\text{th}$ antenna element, and the \ac{AoD} $\big(\theta_{k,\ell}^{(n)},\phi_{k,\ell}^{(n)}\big)$ can be identical or varying across antenna elements. There are several ways to model element radiation pattern $G^{(n)}$. This paper adopts the spherical harmonics representation, as in~\cite{Ying2024Reconfigurable,Zheng2025Enhanced}, to ensure mathematical traceability.

\subsubsection{Spherical Harmonics Representation}
As shown in~\cite{Zheng2025Enhanced}, any radiation pattern can be decomposed into a superposition of infinitely many spherical harmonics such that
\begin{equation}\label{eq:SH}
    G^{(n)}(\theta,\phi)=\sum_{u=0}^{+\infty}\sum_{q=-u}^{u} c_{u q}^{(n)} Y_u^q(\theta,\phi).
\end{equation}
Here, $c_{u q}^{(n)}$ denotes the harmonic coefficient and $Y_u^q(\theta,\phi)$ is the \emph{real} spherical harmonics defined as
\begin{equation}\label{eq:Ylm}
Y_u^q(\theta,\phi)\!=\! \left\{
\begin{array}{ll}
\!\!\!\sqrt{2}N_u^qP_u^q(\cos{\theta})\cos{(q\phi)}, & q>0,\\
\!\!\!\sqrt{2}N_u^{|q|}P_u^{|q|}(\cos{\theta})\sin{(|q|\phi)}, & q< 0,\\
\!\!\!N_u^0P_u^0(\cos{\theta}), & q=0,
\end{array} \right.
\end{equation}
where $N_u^q=\sqrt{\frac{2u+1}{4\pi}\frac{(u-q)!}{(u+q)!}}$ is a normalization factor, and $P_u^q(\cos{\theta})$ represents the associated Legendre functions of $u^\text{th}$ degree and $q^\text{th}$ order. This set of functions $Y_{u}^q(\theta,\phi)$ constitutes a complete real orthonormal basis on the spherical space. 

We further perform a truncation to~\eqref{eq:SH}. By defining the truncation length $T=U^2+2U+1$, we can approximate the radiation pattern of the $n^\text{th}$ antenna as
\begin{align}\label{eq:gn}
    G^{(n)}(\theta,\phi) \approx \sum_{u=0}^{U}\sum_{q=-u}^u c_{uq}^{(n)} Y_u^q(\theta,\phi) = \sum_{t=1}^T \tilde{c}_t^{(n)} \tilde{Y}_t(\theta,\phi),
\end{align}
where $\tilde{c}_t^{(n)} = c_{uq}^{(n)}$ and $\tilde{Y}_t(\theta,\phi) = Y_u^q(\theta,\phi)$, for $t=u^2+u+q+1$, $u\in[0,U]$, $q\in[-u,u]$. For convenience, we further concatenate $\cv^{(n)} \triangleq[\tilde{c}_1^{(n)},\tilde{c}_2^{(n)},\dots,\tilde{c}_T^{(n)}]^\TT\in\mathbb{R}^T$, 
$\bv(\theta,\phi) \triangleq[\tilde{Y}_1(\theta,\phi),\tilde{Y}_2(\theta,\phi),\dots,\tilde{Y}_T(\theta,\phi)]^\TT\in\mathbb{R}^T$, where $\cv^{(n)}$ is the spherical coefficient vector of the $n^\text{th}$ antenna element, and $\bv(\theta,\phi)$ is the spherical basis vector. Using this truncated version of spherical harmonics, we can rewrite~\eqref{eq:SH} as
\begin{equation}\label{eq:Gn_thetaphi}
    G^{(n)}(\theta,\phi) \approx \bv(\theta,\phi)^\TT\cv^{(n)}.
\end{equation}

\subsubsection{EM-Domain Channel Modeling}
According to~\eqref{eq:Gn_kl} and~\eqref{eq:Gn_thetaphi}, we can express entries in the \ac{AGV}~\eqref{eq:AGV} as $G_{k,\ell}^{(n)} = \bv\big(\theta_{k,\ell}^{(n)},\phi_{k,\ell}^{(n)}\big)^\TT\cv^{(n)}$. By defining $\Fm_{\mathsf{EM}}\in\mathbb{R}^{TN_\mathsf{T}\times N_\mathsf{T}}$ and $\dv_{k,\ell}\in\mathbb{R}^{TN_\mathsf{T}}$ as
\begin{align}
    \Fm_{\mathsf{EM}} &\triangleq \mathrm{blkdiag}\big\{\cv^{(1)},\cv^{(2)},\dots,\cv^{(N_\mathsf{T})}\big\},\notag \\
    \dv_{k,\ell} &\triangleq \big[\bv\big(\theta_{k,\ell}^{(1)},\phi_{k,\ell}^{(1)}\big)^\TT\!\!,\bv\big(\theta_{k,\ell}^{(2)},\phi_{k,\ell}^{(2)}\big)^\TT\!\!,\dots,\bv\big(\theta_{k,\ell}^{(N_\mathsf{T})},\phi_{k,\ell}^{(N_\mathsf{T})}\big)^\TT\big]^\TT,\notag
\end{align}
we can express the single path component in~\eqref{eq:hk} as
\begin{align}
    \alphav_{k,\ell}\odot\gv_{k,\ell}\odot\av_{k,\ell} = \Fm_\mathsf{EM}^\TT\underbrace{\Big(\dv_{k,\ell}\odot\big((\alphav_{k,\ell}\odot\av_{k,\ell})\otimes\mathbf{1}_T\big)\Big)}_{\hv_{k,\ell}^\mathsf{EM}\in\mathbb{C}^{TN_\mathsf{T}}}.\notag
\end{align}
Note that $\hv_{k,\ell}^\mathsf{EM}=\dv_{k,\ell}\odot\big((\alphav_{k,\ell}\odot\av_{k,\ell})\otimes\mathbf{1}_T\big)$ is the \ac{EM}-domain channel of the path $(k,\ell)$. Therefore, we can rewrite~\eqref{eq:hk} as
\begin{align}
    \hv_k = \sqrt{\frac{N_\mathsf{T}}{L_k}} \sum_{\ell=1}^{L_k} \Fm_\mathsf{EM}^\TT\hv_{k,\ell}^\mathsf{EM} &= \Fm_\mathsf{EM}^\TT\underbrace{\sqrt{\frac{N_\mathsf{T}}{L_k}} \sum_{\ell=1}^{L_k} \hv_{k,\ell}^\mathsf{EM}}_{\hv_k^\mathsf{EM}\in\mathbb{C}^{TN_\mathsf{T}}}= \Fm_\mathsf{EM}^\TT\hv_k^\mathsf{EM},\notag
\end{align}
where $\hv_k^\mathsf{EM}=\sqrt{\frac{N_\mathsf{T}}{L_k}} \sum_{\ell=1}^{L_k} \hv_{k,\ell}^\mathsf{EM}$ is defined as the \emph{\ac{EM}-domain channel} and $\Fm_\mathsf{EM}$ is defined as the \emph{\ac{EM} precoder}. Substituting $\hv_k=\Fm_\mathsf{EM}^\TT\hv_k^\mathsf{EM}$ into~\eqref{eq:yk} yields
\begin{equation}\label{eq:box}
\boxed{
\begin{aligned}
   y_k = &\big(\hv_k^{\mathsf{EM}}\big)^\TT\Fm_\mathsf{EM}\Fm_\mathsf{RF}\fv_{\mathsf{BB},k}s_k \\
   &\qquad\qquad\quad + \big(\hv_k^{\mathsf{EM}}\big)^\TT\sum_{i\neq k}\Fm_\mathsf{EM}\Fm_\mathsf{RF}\fv_{\mathsf{BB},i}s_i + z_k.
\end{aligned}
}
\end{equation}
This formula justifies the optimization problem described in~\eqref{eq:SRmax}.

\begin{remark}
    This tri-hybrid formulation indicates that, in addition to digital and analog precoding, one can further reconfigure the radiation pattern of each antenna element by adjusting their spherical harmonics coefficients $\{\cv^{(n)}\}_{n=1}^{N_\mathsf{T}}$, i.e., the EM precoder $\Fm_\mathsf{EM}$. While this approach offers preferred traceability compared to the model in~\cite{Wang2025Electromagnetically} that selects a single radiation pattern for each antenna element from a set of candidates, this model may suffer from the over-relaxation issue. Optimization over $\Fm_\mathsf{EM}$ (i.e., $\{\cv^{(n)}\}_{n=1}^{N_\mathsf{T}}$) may result in highly optimized yet physically unrealizable radiation patterns. To address this problem, a projection step can be incorporated after the optimization process, projecting the optimized arbitrary radiation patterns to the feasible radiation pattern candidate set.  
\end{remark}

\section{A WMMSE-Based Precoding Solution}
According to the well-known weighted minimum mean square error (WMMSE) solution to the weighted sum-rate maximization problem~\cite{Shi2011Iteratively}, the maximization problem~\eqref{eq:SRmax} is equivalent to the following minimization problem: 
\begin{subequations}\label{eq:WMMSE}
    \begin{align}
\min_{\wv,\vv,\Fm_\mathsf{EM},\Fm_\mathsf{RF},\Fm_\mathsf{BB}}&\quad \sum_{k=1}^K\beta_k(w_ke_k-\ln w_k)\\
        \mathrm{s.t.\quad\quad }&\quad \eqref{eq:a},\eqref{eq:b},\eqref{eq:c},\eqref{eq:d}.
    \end{align}
\end{subequations}
Here, $\wv=[w_1,w_2,\dots,w_K]^\TT\in\mathbb{R}_+^K$ represents a set of auxiliary variables, and $\vv=[v_1,v_2,\dots,v_K]^\TT\in\mathbb{C}^K$ is a collection of fictitious linear combiners as in~\cite{Shi2011Iteratively} such that
\begin{multline}
     \hspace{-1em} e_k \!=\! \big(1-v_k(\hv_k^\mathsf{EM})^\TT\Fm_\mathsf{EM}\Fm_\mathsf{RF}\fv_{\mathsf{BB},k}\!\big)\!\big(1-v_k(\hv_k^\mathsf{EM})^\TT\Fm_\mathsf{EM}\Fm_\mathsf{RF}\fv_{\mathsf{BB},k}\!\big)^*\!\!\\
     +|v_k|^2(\hv_k^\mathsf{EM})^\TT\!\Big(\!\sum_{i\neq k}\Fm_\mathsf{EM}\Fm_\mathsf{RF}\fv_{\mathsf{BB},i}\fv_{\mathsf{BB},i}^\HH\Fm_\mathsf{RF}^\HH\Fm_\mathsf{EM}^\HH\!\Big)(\hv_k^\mathsf{EM})^*
     + \sigma_k^2|v_k|^2\!.\notag
\end{multline}
The advantage of~\eqref{eq:WMMSE} is that the optimization of each variable in $\{\wv,\vv,\Fm_\mathsf{EM},\Fm_\mathsf{RF},\Fm_\mathsf{BB}\}$ while fixing others is convex, whereas~\eqref{eq:SRmax} lacks this property.

To solve~\eqref{eq:WMMSE}, we employ the block coordinate descent method, similar to~\cite{Shi2011Iteratively}, to optimize the unknown variables alternately. During these alternating updates, we focus on optimizing the combined effect of the analog and digital precorders instead of optimizing them individually to reduce algorithm complexity. Specifically, we introduce a fictitious fully digital precoder, defined as $\Fm_\Dt\triangleq\Fm_\mathsf{RF}\Fm_\mathsf{BB}$. Then, we optimize $\{\wv,\vv,\Fm_\mathsf{EM},\Fm_\mathsf{D}\}$ alternately. Afterward, we decompose the optimized $\Fm_\Dt$ to obtain the optimal $\Fm_\mathsf{RF}$ and $\Fm_\mathsf{BB}$.
A summary of this optimization solution is provided in Algorithm~\ref{algo:1}. In the following subsections, we provide a detailed solution for each of the five steps highlighted in Algorithm~\ref{algo:1}.

 \begin{algorithm}[t]
 \caption{An Alternating Solver for~\eqref{eq:WMMSE}}
 \label{algo:1}
 \begin{algorithmic}[1]
 \State \textbf{Input:}  $\{{\hv}_k^\mathsf{EM}\}_{k=1}^K$.\qquad \textbf{Output:}  $\Fm_\mathsf{EM}^{\mathsf{opt}},\Fm_\mathsf{RF}^{\mathsf{opt}},\Fm_\mathsf{BB}^{\mathsf{opt}}$.
\State Set iteration index $i=0$, initialize $\wv^{[0]}$,$\vv^{[0]}$,$\Fm_\mathsf{EM}^{[0]}$,$\Fm_\mathsf{RF}^{[0]}$,$\Fm_\mathsf{BB}^{[0]}$, and compute $\Fm_\mathsf{D}^{[0]}=\Fm_\mathsf{RF}^{[0]}\Fm_\mathsf{BB}^{[0]}$.
\Statex{\hspace{-1.2em}\textbf{Repeat:}} 

\Statex{\textbf{Step} \colorcircled{RoyalBlue}{\footnotesize{1}}:} Update $\vv^{[i+1]}$ given $\{\wv^{[i]},\Fm_\mathsf{EM}^{[i]},\Fm_\mathsf{D}^{[i]}\}$.
\Statex{\textbf{Step} \colorcircled{RoyalBlue}{\footnotesize{2}}:} Update $\wv^{[i+1]}$ given $\{\vv^{[i+1]},\Fm_\mathsf{EM}^{[i]},\Fm_\mathsf{D}^{[i]}\}$.
\Statex{\textbf{Step} \colorcircled{RoyalBlue}{\footnotesize{3}}:} Update $\Fm_\mathsf{D}^{[i+1]}$ given $\{\wv^{[i+1]},\vv^{[i+1]},\Fm_\mathsf{EM}^{[i]}\}$.
\Statex{\textbf{Step} \colorcircled{RoyalBlue}{\footnotesize{4}}:} Update $\Fm_\mathsf{EM}^{[i+1]}$ given $\{\wv^{[i+1]},\vv^{[i+1]},\Fm_\mathsf{D}^{[i+1]}\}$.
\State $i = i + 1$.

\Statex{\hspace{-1.2em}\textbf{Until}} convergence.

\Statex{\hspace{-1.2em}\textbf{Step} \colorcircled{RoyalBlue}{\footnotesize{5}}:} Decompose $\{\Fm_\mathsf{RF}^{[i+1]},\Fm_\mathsf{BB}^{[i+1]}\}=\arg\displaystyle\min_{\Fm_\mathsf{RF},\Fm_\mathsf{BB}}\ \|\Fm_\mathsf{D}^{[i+1]}-\Fm_\mathsf{RF}\Fm_\mathsf{BB}\|_\mathsf{F}^2$, s.t.~\eqref{eq:a}, \eqref{eq:b}.

\Statex{\hspace{-1.2em}\textbf{Return}} $\Fm_\mathsf{EM}^{[i+1]},\Fm_\mathsf{RF}^{[i+1]},\Fm_\mathsf{BB}^{[i+1]}$.

\end{algorithmic} 
\end{algorithm}

\subsection{Solutions to Step \colorcircled{RoyalBlue}{\textnormal{\footnotesize{1}}}, Step \colorcircled{RoyalBlue}{\textnormal{\footnotesize{2}}}, and Step \colorcircled{RoyalBlue}{\textnormal{\footnotesize{3}}} }
Following the principle in~\cite{Shi2011Iteratively}, one can find closed-form solutions for Step \colorcircled{RoyalBlue}{\textnormal{\footnotesize{1}}}, \colorcircled{RoyalBlue}{\textnormal{\footnotesize{2}}}, and \colorcircled{RoyalBlue}{\textnormal{\footnotesize{3}}} in Algorithm~\ref{algo:1} as:
\begin{align}
    v_k^\mathsf{opt} &= \frac{\fv_{\mathsf{D},k}^\HH\Fm_\mathsf{EM}^\HH(\hv_k^{\mathsf{EM}})^*}{(\hv_k^{\mathsf{EM}})^\TT\Big(\displaystyle\sum_{i=1}^K\Fm_\mathsf{EM}\fv_{\Dt,i}\fv_{\Dt,i}^\HH\Fm_\mathsf{EM}^\HH\Big)(\hv_k^{\mathsf{EM}})^* + \sigma_k^2},\\
    w_k^\mathsf{opt} &= e_k^{-1} = \big(1-v_k^\mathsf{opt}(\hv_k^{\mathsf{EM}})^\TT\Fm_\mathsf{EM}\fv_{\mathsf{D},k}\big)^{-1},\\
    \fv_{\Dt,k}^\mathsf{opt} &= \Big(\sum_{j=1}^K\beta_j w_j^{\mathsf{opt}}\big|v_j^{\mathsf{opt}}\big|^2 \Fm_\mathsf{EM}^\HH(\hv_j^{\mathsf{EM}})^*(\hv_j^{\mathsf{EM}})^\TT\Fm_\mathsf{EM} + \lambda_k\mathbf{I}\Big)^{-1} \notag\\
    &\qquad\qquad\qquad\qquad\quad \times \beta_k w_k^\mathsf{opt} (v_k^\mathsf{opt})^* \Fm_\mathsf{EM}^\HH (\hv_k^{\mathsf{EM}})^*,
\end{align}
where $[\fv_{\mathsf{D},1},\fv_{\mathsf{D},2},\dots,\fv_{\mathsf{D},K}]=\Fm_\Dt$, $\lambda_k\geq 0 $ is a Lagrange multiplier attached to the total transmit power constraint~\eqref{eq:a}. The value of $\lambda_k$ should be chosen to satisfy the complementarity slackness condition of the inequality constraint~\eqref{eq:a}. It can be determined through a simple bisection search since the transmit power as a function of $\lambda_k$ here is monotonic.

\subsection{Solution to Step \colorcircled{RoyalBlue}{\textnormal{\footnotesize{4}}} }
Recalling~\eqref{eq:c}, the \ac{EM} precoder $\Fm_\mathsf{EM}$ is a collection of the spherical harmonic coefficient vectors of all antenna radiation patterns, $\{\cv^{(n)}\}_{n=1}^{N_\mathsf{T}}$, where each $\cv^{(n)}$ is subject to an equality constraint~\eqref{eq:d}. To obtain closed-form optimum, we solve Step~\colorcircled{RoyalBlue}{\textnormal{\footnotesize{4}}} by updating each of $\{\cv^{(n)}\}_{n=1}^{N_\mathsf{T}}$ alternately.

\subsubsection{Problem Reformulation}
Before solving this optimization problem, we perform an additional analysis of $\cv^{(n)}$. Recalling~\eqref{eq:Gn_thetaphi}, we note that there is an implicit constraint on $\cv^{(n)}$ that the synthesized radiation pattern needs to be strictly positive, as demonstrated in~\eqref{eq:AGV}. However, neither~\eqref{eq:SRmax} nor~\eqref{eq:WMMSE} includes this constraint, potentially leading to an optimized radiation pattern with non-positive values, which lack physical feasibility. By further examining~\eqref{eq:Gn_thetaphi}, we observe that the first entry in $\bv(\theta,\phi)$ is the $0^\text{th}$-order spherical harmonics $Y_0^0(\theta,\phi)$, which is a sphere with a constant positive value across all $(\theta,\phi)$. Therefore, under the power constraint~\eqref{eq:d}, we can always assign a sufficiently large coefficient to $Y_0^0(\theta,\phi)$ to ensure a strictly positive radiation pattern throughout the optimization process. To do so, we first perform the following partitions:
\begin{equation}\label{eq:partition}
   \hv_k^\mathsf{EM} \!=\! \begin{bmatrix}
        \hv_{k,(1)}^\mathsf{EM}\vspace{-0.2em}\\
        \vdots\\
        \hv_{k,(N_\mathsf{T})}^\mathsf{EM}
    \end{bmatrix},\ 
    \hv_{k,(n)}^\mathsf{EM} \!=\! \begin{bmatrix}
        h_{k,(n)}^\mathsf{DC}\vspace{0.2em}\\
        \hv_{k,(n)}^\mathsf{AC}
    \end{bmatrix},\ 
    \cv^{(n)} \!=\! \begin{bmatrix}
        c_{(n)}^\mathsf{DC}\vspace{0.2em}\\
        \cv_{(n)}^\mathsf{AC}
    \end{bmatrix},
\end{equation}
where $\hv_{k,(n)}^\mathsf{EM}\in\mathbb{C}^{T}$, $\hv_{k,(n)}^\mathsf{AC}\in\mathbb{C}^{T-1}$, and $\cv_{(n)}^\mathsf{AC}\in\mathbb{R}^{T-1}$. Hence, we can fix a (large enough) constant $0^\text{th}$-order coefficient $c_{(n)}^\mathsf{DC}=\eta\in(0,\sqrt{4\pi})$, $\forall n$, and optimize the other coefficients $\cv_{(n)}^\mathsf{AC}$ with the constraint $\|\cv_{(n)}^\mathsf{AC}\|^2=4\pi-\eta^2$.

Therefore, in this step, we solve the following problem:
\begin{subequations}\label{eq:WMMSE2}
    \begin{align}
\min_{\cv_{(n)}^\mathsf{AC}}&\quad \sum_{k=1}^K\beta_k(w_ke_k-\ln w_k)\\
        \mathrm{s.t.}&\quad \|\cv_{(n)}^\mathsf{AC}\|^2=4\pi-\eta^2, \label{eq:WMMSE2b}
    \end{align}
\end{subequations}
given $\wv,\vv,\Fm_\mathsf{D}$, and $\{\cv_{(j)}^\mathsf{AC}\}_{j\neq n}$.
Now, we derive $e_k$ as a function of $\cv_{(n)}^\mathsf{AC}$. Based on~\eqref{eq:partition}, we can express a component in $e_k$ as
\begin{align}
    &(\hv_k^{\mathsf{EM}})^\TT\Fm_\mathsf{EM}\Fm_\mathsf{RF}\fv_{\mathsf{BB},i} = (\hv_k^{\mathsf{EM}})^\TT\Fm_\mathsf{EM}\fv_{\mathsf{D},i}\\
    =& \sum_{n=1}^{N_\mathsf{T}}\Big((\hv_{k,(n)}^\mathsf{EM})^\TT\cv^{(n)}\Big) f_{\Dt,i}^{(n)}\\
    =& \sum_{n=1}^{N_\mathsf{T}}\Big( (\hv_{k,(n)}^\mathsf{AC})^\TT\cv_{(n)}^{\mathsf{AC}} + h_{k,(n)}^\mathsf{DC}\eta \Big) f_{\Dt,i}^{(n)}\triangleq g_{k,i}(\cv_{(n)}^\mathsf{AC}),
\end{align}
where $[f_{\Dt,i}^{(1)},f_{\Dt,i}^{(2)},\dots,f_{\Dt,i}^{(N_\mathsf{T})}]^\TT=\fv_{\Dt,i}$. Then, we can re-express 
\begin{equation}
    e_k = |1-v_kg_{k,k}|^2 + |v_k|^2\sum_{i\neq k} |g_{k,i}|^2 + \sigma_k^2|v_k|^2. 
\end{equation}

\subsubsection{A Closed-Form Solution to~\eqref{eq:WMMSE2}}
To solve~\eqref{eq:WMMSE2}, we can write its Lagrangian as
\begin{equation}
    \mathcal{L}(\cv_{(n)}^\mathsf{AC},\!\nu_{(n)})\! =\!\! \sum_{k=1}^K\! \beta_k(w_k e_k -\ln w_k) +\nu_{(n)}\!\big((\cv_{(n)}^\mathsf{AC})^{\!\TT}\cv_{(n)}^\mathsf{AC}-4\pi + \eta^2\big),\notag
\end{equation}
where $\nu_{(n)}$ is a Lagrange multiplier associated with the equality constraint~\eqref{eq:WMMSE2b}. According to the Karush-Kuhn-Tucker (KKT) conditions, we have 
\begin{subequations}
    \begin{align}
    \nabla_{\cv_{(n)}^\mathsf{AC,opt}} \mathcal{L} &= 0,\label{eq:0gradient}\\
    \|{\cv_{(n)}^\mathsf{AC,opt}}\|^2 &= 4\pi-\eta^2.\label{eq:equality}
\end{align}
\end{subequations}
Based on~\eqref{eq:0gradient}, one can derive
\begin{equation}
    \big(\Am+2\nu_{(n)}\mathbf{I}_{T-1}\big){\cv_{(n)}^\mathsf{AC,opt}} + \dv = \mathbf{0}, \label{eq:Ad}
\end{equation}
where
\begin{align}
    &\Am = 2\mathrm{Re}\Big\{\sum_{k=1}^K\sum_{i=1}^{K} \beta_kw_k|v_k|^2|f_{\Dt,i}^{(n)}|^2\big(\hv_{k,(n)}^{\mathsf{AC}}\big)^*\big(\hv_{k,(n)}^{\mathsf{AC}}\big)^\TT\Big\},\notag\\
    &\dv \!=\! \dv_1+\dv_2+\dv_3,\ \dv_1\!=\! -2\mathrm{Re}\Big\{\sum_{k=1}^K\beta_kw_kv_kf_{\Dt,k}^{(n)}\big(\hv_{k,(n)}^{\mathsf{AC}}\big)^*\Big\},\notag \\
    &\dv_2 \!=\! 2\mathrm{Re}\Big\{ \!\sum_{k=1}^K\! \sum_{i=1}^K\! \sum_{j=1}^K \!\beta_k w_k |v_k|^2 f_{\Dt,i}^{(j)} (f_{\Dt,i}^{(n)})^* \hv_{k,(j)}^{\mathsf{DC}} (\hv_{k,(n)}^{\mathsf{AC}})^*\eta \Big\}, \notag\\
    &\dv_3 \!=\! 2\mathrm{Re}\Big\{\!\sum_{k=1}^K\!\sum_{i=1}^K\!\sum_{j\neq n}\!\beta_k w_k |v_k|^2\! f_{\Dt,i}^{(j)}\! (f_{\Dt,i}^{(n)})^{\!*} \!(\!\hv_{k,(n)}^{\mathsf{AC}}\!)^{\!*}\! (\!\hv_{k,(j)\!}^{\mathsf{AC}})^{\!\TT}\!\cv_{(j)}^\mathsf{AC}\! \Big\}.\notag
\end{align}
From~\eqref{eq:Ad}, we obtain
\begin{equation}
    {\cv_{(n)}^\mathsf{AC,opt}} = -\big(\Am+2\nu_{(n)}\mathbf{I}_{T-1}\big)^{-1}\dv,
\end{equation}
where the Lagrange multiplier $\nu_{(n)}\in\mathbb{R}$ must be chosen to satisfy the equality constraint~\eqref{eq:equality}. It is trivial to see that multiple values of $\nu_{(n)}$ satisfying it exist, each of which is a local optimum of~\eqref{eq:WMMSE2}. While determining all local optima is challenging, the following theorem gives a practical way to determine two of these optima.
\begin{theorem}\label{theo:1}
    Let $\{\lambda_1, \lambda_2, \dots, \lambda_{T-1}\}$ denote the eigenvalues of $\Am$ in descending order. There exists a unique feasible $\nu_{(n)}$ in $(-\infty, -\frac{\lambda_1}{2})$ and another unique feasible $\nu_{(n)}$ in $(-\frac{\lambda_{T-1}}{2}, +\infty)$ that satisfies $\|{\cv_{(n)}^\mathsf{AC,opt}}\|^2 = 4\pi - \eta^2$. Moreover, the squared Euclidean norm $\|{\cv_{(n)}^\mathsf{AC,opt}}\|^2$ is a monotonic function of $\nu_{(n)}$ within both intervals $(-\infty, -\frac{\lambda_1}{2})$ and $(\frac{\lambda_{T-1}}{2}, +\infty)$. Consequently, these two local optima can be efficiently determined via bisection search.
\end{theorem}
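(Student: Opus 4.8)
The plan is to treat the stationarity condition $(\Am+2\nu_{(n)}\Id)\,\cv_{(n)}^{\mathsf{AC,opt}}=-\dv$ as a one-dimensional \emph{secular equation} in $\nu_{(n)}$, in the spirit of the classical trust-region-subproblem analysis. Abbreviating $\nu\triangleq\nu_{(n)}$ and $\rho\triangleq 4\pi-\eta^2>0$, I would first note that $\Am$ is real symmetric and in fact positive semidefinite, since it is a nonnegative combination of rank-one terms $\mathrm{Re}\{(\hv_{k,(n)}^{\mathsf{AC}})^*(\hv_{k,(n)}^{\mathsf{AC}})^\TT\}$. Diagonalize $\Am=\Qm\Lambdam\Qm^\TT$ with $\Qm$ orthogonal and $\Lambdam=\mathrm{diag}(\lambda_1,\dots,\lambda_{T-1})$, $\lambda_1\ge\cdots\ge\lambda_{T-1}\ge 0$, and set $\tilde\dv\triangleq\Qm^\TT\dv$ with entries $\tilde d_i$. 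For any $\nu\notin\{-\lambda_i/2\}_{i=1}^{T-1}$ the matrix $\Am+2\nu\Id$ is invertible and $\cv_{(n)}^{\mathsf{AC,opt}}=-(\Am+2\nu\Id)^{-1}\dv$, so
$$\varphi(\nu)\triangleq\big\|\cv_{(n)}^{\mathsf{AC,opt}}\big\|^2=\big\|(\Am+2\nu\Id)^{-1}\dv\big\|^2=\sum_{i=1}^{T-1}\frac{\tilde d_i^{\,2}}{(\lambda_i+2\nu)^2},$$
and the feasibility condition $\|\cv_{(n)}^{\mathsf{AC,opt}}\|^2=4\pi-\eta^2$ is precisely $\varphi(\nu)=\rho$. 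Since $\rho>0$ (because $\eta\in(0,\sqrt{4\pi})$) and $\varphi$ is continuous away from its poles $\{-\lambda_i/2\}$, the whole statement reduces to analyzing $\varphi$ on the unbounded interval to the right of all poles and on the one to the left of all poles.

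Next I would differentiate, $\varphi'(\nu)=-4\sum_{i=1}^{T-1}\tilde d_i^{\,2}/(\lambda_i+2\nu)^3$, and examine signs. For $\nu\in(-\lambda_{T-1}/2,+\infty)$ every factor $\lambda_i+2\nu$ is strictly positive, hence $\varphi'(\nu)<0$ and $\varphi$ is strictly decreasing there; for $\nu\in(-\infty,-\lambda_1/2)$ every $\lambda_i+2\nu$ is strictly negative, so each cube is negative and $\varphi'(\nu)>0$, i.e., $\varphi$ is strictly increasing. (Strictness uses $\dv\neq\zerov$; the degenerate case $\dv=\zerov$ forces $\cv_{(n)}^{\mathsf{AC,opt}}=\zerov$, which cannot meet the norm constraint and I would dispatch by a brief separate remark.) This gives the monotonicity claim. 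For existence and uniqueness I would then pass to the endpoints: $\varphi(\nu)\to 0^+$ as $\nu\to\pm\infty$, while $\varphi(\nu)\to+\infty$ as $\nu\to(-\lambda_{T-1}/2)^+$ and as $\nu\to(-\lambda_1/2)^-$, since the terms indexed by the extreme eigenvalue dominate. Thus $\varphi$ restricted to each of $(-\lambda_{T-1}/2,+\infty)$ and $(-\infty,-\lambda_1/2)$ is a continuous strictly monotone bijection onto $(0,+\infty)$, so $\varphi(\nu)=\rho$ has exactly one solution in each interval; one lies on the branch with $\Am+2\nu\Id\succeq 0$ (which contains the global minimizer of the sphere-constrained quadratic) and the other on the branch with $\Am+2\nu\Id\preceq 0$. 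Finally, on each interval an explicit bracket is elementary — near the pole $\varphi$ exceeds $\rho$ because of the dominant term, and the bound $\varphi(\nu)\le\|\dv\|^2/(\lambda_{T-1}+2\nu)^2$ (respectively in terms of $|\lambda_1+2\nu|$) falls below $\rho$ for $|\nu|$ large — so bisection on the continuous monotone $\varphi$ converges to each root, which is the last assertion.

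The step I expect to demand the most care is justifying $\varphi(\nu)\to+\infty$ at the two extreme poles, i.e., excluding the ``hard case'' of the trust-region subproblem: if $\lambda_1$ (or $\lambda_{T-1}$) is a repeated eigenvalue and $\dv$ is orthogonal to its \emph{entire} eigenspace, then all numerators $\tilde d_i$ attached to that pole vanish, $\varphi$ stays bounded as $\nu$ approaches the pole, the relevant interval maps only onto $(0,\varphi_{\max})$, and a feasible $\nu$ need not exist when $\rho>\varphi_{\max}$. I would therefore either invoke a (generically satisfied, full-measure) non-degeneracy hypothesis that $\dv$ has a nonzero component along the eigenspaces of $\lambda_1$ and of $\lambda_{T-1}$, or, in that exceptional situation, replace the offending endpoint root by the pair consisting of $\nu=-\lambda_{T-1}/2$ and a suitable vector in $\ker(\Am+2\nu\Id)$ added to $-(\Am+2\nu\Id)^{\dagger}\dv$ so as to restore $\|\cv_{(n)}^{\mathsf{AC,opt}}\|^2=\rho$. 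All remaining ingredients — the diagonalization, the closed form for $\varphi$, the sign of $\varphi'$, and the bracketing for bisection — are routine.
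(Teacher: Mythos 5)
Your proposal follows essentially the same route as the paper: diagonalize the real symmetric $\Am$, write $\|\cv_{(n)}^{\mathsf{AC,opt}}\|^2$ as the secular function $\varphi(\nu)=\sum_i \tilde d_i^{\,2}/(\lambda_i+2\nu)^2$, and argue monotonicity plus the limiting values $0$ at $\nu\to\pm\infty$ and $+\infty$ at the extreme poles to get one root in each outer interval. The paper asserts the monotonicity directly from the expression $\dv^\TT\Vm\,\mathrm{diag}\{(\lambda_i+2\nu)^{-2}\}\Vm^\TT\dv$, whereas you make it explicit via the sign of $\varphi'$; that is a cosmetic difference. The substantive point is your final paragraph: you correctly observe that $\varphi(\nu)\to+\infty$ as $\nu\to(-\lambda_1/2)^-$ (resp.\ $(-\lambda_{T-1}/2)^+$) requires $\dv$ to have a nonzero component in the eigenspace of $\lambda_1$ (resp.\ $\lambda_{T-1}$), and that in the degenerate ``hard case'' (including $\dv=\zerov$) the claimed root need not exist, so the theorem as stated needs either a genericity hypothesis or the standard trust-region fix of adding a null-space component at $\nu=-\lambda/2$. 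The paper's proof passes over this silently, so your version is the more careful one; the only caution is that your aside that $\Am\succeq 0$ is true but unused, and you should keep the interval notation consistent with the corrected sign $-\lambda_{T-1}/2$ (the theorem statement itself drops the minus sign once).
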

\begin{proof}
    Since \(\Am\) is a real symmetric matrix, we can express \(\Am\) as \(\Am = \Vm \, \mathrm{diag} \{\lambda_1, \lambda_2, \dots, \lambda_{T-1}\} \Vm^\mathsf{T}\), where \(\Vm\) contains the orthonormal eigenvectors of \(\Am\). Thus, we can decompose
    \begin{equation}\notag
        \left(\Am \!+\! 2\nu_{(n)}\mathbf{I}\right)^{-1} \!\!=\! \Vm \, \mathrm{diag} \left\{ \frac{1}{\lambda_1 \!+\! 2\nu_{(n)}}, \dots, \frac{1}{\lambda_{T-1} \!+\! 2\nu_{(n)}} \right\} \Vm^\mathsf{T}.
    \end{equation}
    Then, we can express
    \begin{align}
        &\|{\cv_{(n)}^\mathsf{AC,opt}}\|^2 \label{eq:34} \\
        &= \dv^\mathsf{T} \Vm \, \mathrm{diag} \left\{ \frac{1}{(\lambda_1 + 2\nu_{(n)})^2}, \dots, \frac{1}{(\lambda_{T-1} + 2\nu_{(n)})^2} \right\} \Vm^\mathsf{T} \dv. \notag
    \end{align}
    From \eqref{eq:34}, we observe that \(\|{\cv_{(n)}^\mathsf{AC,opt}}\|^2\) is monotonically increasing with \(\nu_{(n)}\) in the interval \((- \infty, -\frac{\lambda_1}{2})\). Additionally, \(\|{\cv_{(n)}^\mathsf{AC,opt}}\|^2 = 0\) as \(\nu_{(n)} \to -\infty\), and \(\|{\cv_{(n)}^\mathsf{AC,opt}}\|^2 = +\infty\) as \(\nu_{(n)} \to -\frac{\lambda_1}{2}\). Therefore, there must exist a unique feasible \(\nu_{(n)}\) in \((- \infty, -\frac{\lambda_1}{2})\) such that \(\|{\cv_{(n)}^\mathsf{AC,opt}}\|^2 = 4\pi - \eta^2 > 0\). The same reasoning applies to the interval \((\frac{\lambda_{T-1}}{2}, +\infty)\), leading to a similar conclusion.
\end{proof}

Theorem~\ref{theo:1} enables us to identify two local optima effectively. Although these may not necessarily be the global optimum, both can be evaluated during the update of \(\cv_{(n)}\). The updated \({\cv_{(n)}^\mathsf{AC,opt}}\) is accepted only if one of the two local optima outperforms the previous \({\cv_{(n)}^\mathsf{AC,opt}}\); otherwise, the previous \({\cv_{(n)}^\mathsf{AC,opt}}\) is retained to ensure a non-decreasing iteration process.

\subsection{Solution to Step \colorcircled{RoyalBlue}{\textnormal{\footnotesize{5}}} and Projection onto the Realizable Radiation Pattern Set}
We solve the decomposition problem in Step \colorcircled{RoyalBlue}{\textnormal{\footnotesize{5}}} using the iterative algorithm proposed in~\cite[Algo.1]{Huang2024Hybrid}. Up to this point, we have obtained $\Fm_\mathsf{EM}^{\mathsf{opt}},\Fm_\mathsf{RF}^{\mathsf{opt}}$, and $\Fm_\mathsf{BB}^{\mathsf{opt}}$ through Algorithm~\ref{algo:1}. However, as mentioned earlier, the spherical harmonics-based optimization may overly refine certain radiation patterns, resulting in solutions that are not physically realizable. To address this, we introduce an additional projection step to ensure that each antenna unit selects a feasible radiation pattern. Specifically, we assume there exist $R$ available radiation patterns for each antenna unit, denoted as $\{\bar{G}_1,\bar{G}_2,\dots,\bar{G}_R\}$. Each $\bar{G}_r$ here is a function of \ac{AoD} $(\theta,\phi)$. Therefore, we make the following assignment to the gain of the $n^\text{th}$ antenna
\begin{equation}\label{eq:proj1}
    G^{(n)}\big(\theta_{k,\ell}^{(n)},\phi_{k,\ell}^{(n)}\big) \leftarrow \bar{G}_{r_n}\big(\theta_{k,\ell}^{(n)},\phi_{k,\ell}^{(n)}\big),
\end{equation}
\begin{equation}\label{eq:proj2}
    r_n = \arg\min_r\sum_{k,\ell}\Big( \bar{G}_{r}\!\big(\theta_{k,\ell}^{(n)},\phi_{k,\ell}^{(n)}\big) \!-\! \bv\big(\theta_{k,\ell}^{(n)},\phi_{k,\ell}^{(n)}\big)^{\!\TT}\!\cv^{(n)}_{\mathsf{opt}} \Big)^2\!\!,
\end{equation}
where $\mathrm{blkdiag}\big\{\cv^{(1)}_{\mathsf{opt}},\cv^{(2)}_{\mathsf{opt}},\dots,\cv^{(N_\mathsf{T})}_{\mathsf{opt}}\big\} = \Fm_{\mathsf{EM}}^{\mathsf{opt}}$. 

\section{Simulation Results}

We simulate a \ac{BS} equipped with a $3\times 3$ array of \acp{ERA}, thus $N_\mathsf{T}=9$. The system parameters are set as follows: $K=2$, $L=3$, $N_\mathsf{RF}=4$, $T=25$, $\eta=\sqrt{2\pi}$. The signal frequency is $\unit[30]{GHz}$, and the \ac{AWGN} power $\sigma_k^2=\unit[-95]{dBm},\ \forall k$. The maximum transmit power $P_\mathsf{max}=\unit[10]{dBm}$ by default. The \ac{BS} is positioned at $[0,0,10]^\TT$, while the \ac{UE} positions are randomly generated within a $\unit[200]{m}$ distance. In the projection step, this work uses the 64 radiation patterns from~\cite[Fig.~3]{Wang2025Electromagnetically}, whose physical realizability has been experimentally validated in array configurations.

\begin{figure}[t]
  \centering
  \includegraphics[width=\linewidth]{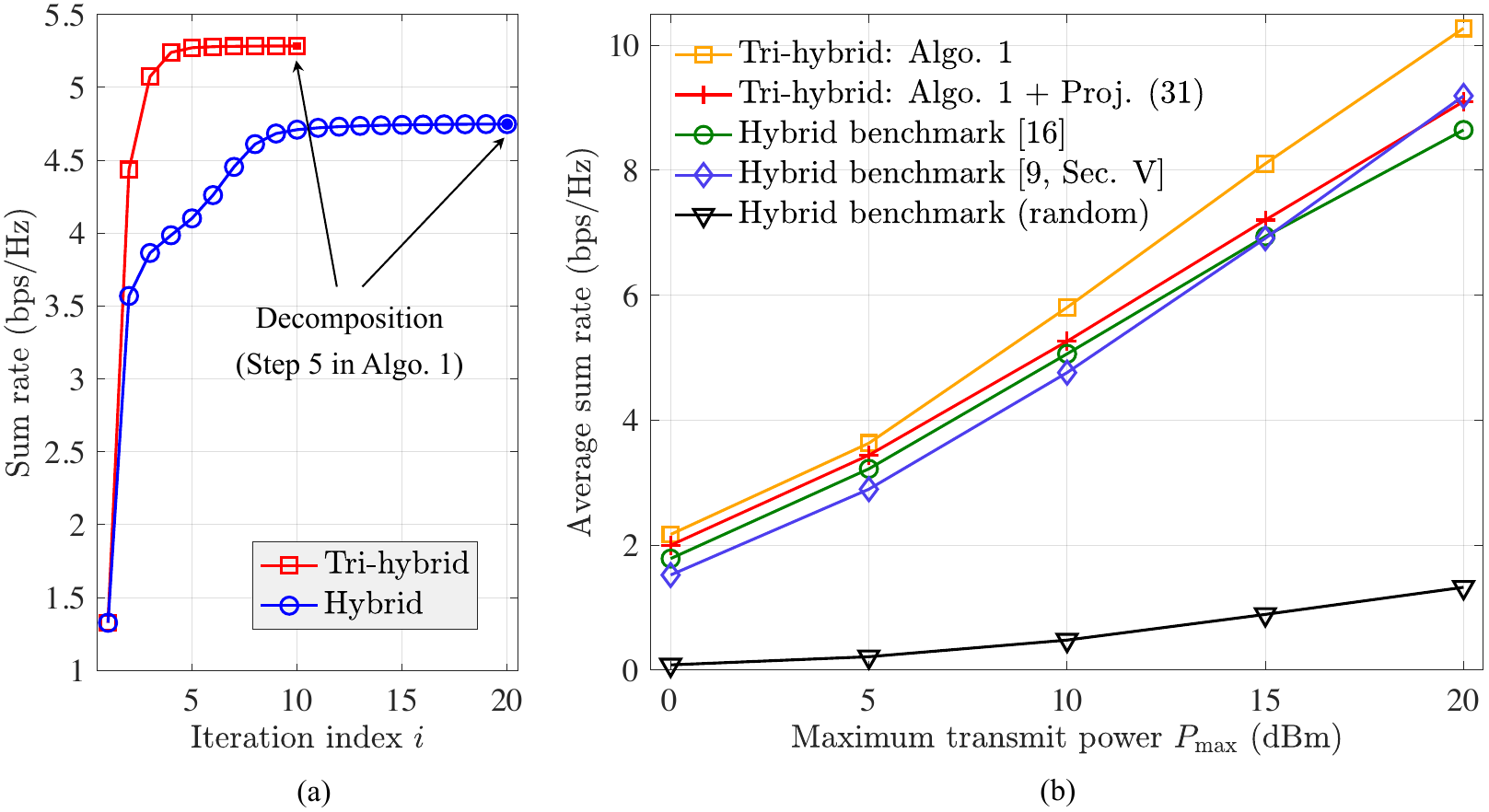}
  \vspace{-2em}
  \caption{
  Evaluation of (a) sum-rate versus iteration in Algorithm~\ref{algo:1} and (b) average sum rate versus transmit power when using various precoding approaches.}
  \label{fig_sims}
\end{figure} 

Figure~\ref{fig_sims}-(a) illustrates the sum rate as a function of iteration 
$i$ in a single run of Algorithm~\ref{algo:1}. The results show that the proposed tri-hybrid precoding solution not only achieves a higher sum rate but also converges faster compared to the conventional hybrid precoding approach. Here, the conventional hybrid precoding is implemented using the WMMSE solution~\cite{Shi2011Iteratively}, which corresponds to running Algorithm~\ref{algo:1} without updating $\Fm_\mathsf{EM}$. Furthermore, Fig.~\ref{fig_sims}-(b) evaluates the average sum rate as a function of the maximum transmit power \( P_\mathsf{max} \), comparing our approach with two benchmark methods from~\cite{Shi2011Iteratively} and~\cite{Sohrabi2016Hybrid}. The results indicate that optimizing the radiation pattern of each antenna in the spherical harmonic space without considering realizability allows the tri-hybrid precoding scheme to achieve a significantly higher sum rate than conventional hybrid precoding architectures. However, after projecting onto the set of realizable radiation patterns, the performance gain diminishes considerably. In some cases, such as $P_\mathsf{max}=\unit[20]{dBm}$, the projected tri-hybrid solution even underperforms compared to conventional hybrid solutions. This suggests that the adopted projection steps~\eqref{eq:proj1} and~\eqref{eq:proj2} are suboptimal, highlighting the need for improved methods to identify the optimal realizable radiation patterns to fully unlock the potential of \acp{ERA}.

\section{Conslusion}
We studied the tri-hybrid multi-user precoding problem based on the radiation pattern-reconfigurable \acp{ERA}, where the digital precoder, analog precoder, and \ac{EM} precoder are optimized jointly. The problem is tackled through an alternating optimization approach, with each subproblem solvable in closed form. Additionally, we performed a projection step that projects the optimized radiation pattern onto a set of realizable radiation patterns obtained from a real \ac{ERA} prototype. The results indicate that when accounting for real-world realizability, the performance enhancement may not be as significant as the theoretically optimized results in spherical harmonic space. This limitation can be overcome by developing more effective projection methods and/or improving \ac{ERA} hardware design.

\bibliography{references}
\bibliographystyle{IEEEtran}

\end{document}